\theoremstyle{plain}
\newtheorem{theorem}{Theorem}
\newtheorem{proposition}[theorem]{Proposition}
\newtheorem{remark}[theorem]{Remark}
\newtheorem{rhp}[theorem]{Riemann--Hilbert Problem}
\begin{document}

\title[Algebro-Geometric Finite Gap Primitive Solutions]{Algebro-Geometric Finite Gap Solutions to the Korteweg--de Vries Equation as Primitive Solutions}
\author{Patrik V. Nabelek}
\address{Department of Mathematics, Kidder Hall 368, Oregon State University, Corvallis, OR 97331-4605}
\email{nabelekp@oregonstate.edu}

\maketitle

\begin{abstract}
In this paper we show that all algebro-geometric finite gap solutions to the Korteweg--de Vries equation can be realized as a limit of N-soliton solutions as N diverges to infinity (see remark \ref{rmk:shifty} for the precise meaning of this statement).
This is done using the the primitive solution framework initiated by \cite{DZZ16,ZDZ16,ZZD16}.
One implication of this result is that the N-soliton solutions can approximate any bounded periodic solution to the Korteweg--de Vries equation arbitrarily well in the limit as N diverges to infinity.
We also study primitive solutions numerically that have the same spectral properties as the algebro-geometric finite gap solutions but are not algebro-geometric solutions.
\end{abstract}

\section{Introduction}

The Korteweg--de Vries (KdV) equation
\begin{equation} \label{eq:KdV} u_t-6uu_x+u_{xxx} = 0 \end{equation}
was originally derived to describe soliton waves on the surface of a channel \cite{KdV}.
In particular, the KdV equation is a weakly nonlinear description of channeled surface waves.
The KdV equation has also seen use in oceanography \cite{Osborne2010}, including as a weakly nonlinear description of nonlinear internal waves \cite{HM06}.
The KdV equation is also important because it is the prototypical example of an infinite dimensional, completely integrable Hamiltonian system \cite{ZF71}.
The differential equation form of the infinitely many conservation laws making the KdV completely integrable are known as the higher KdV equations.
A solution $u(x,t)$ to the KdV equation corresponds to an isospectral evolution of 1D Schr\"{o}dinger operators $-\partial_x^2+u(x,t)$ obeying the Lax equation \cite{Lax} (this is also true for solutions $u$ to the higher KdV equations).
The solution $u(x,t)$ is also called a potential, because for each fixed $t$ the function $u(x,t)$ can be interpretated as a potential energy when it appears in the Schr\"{o}dinger operator.

Two important boundary conditions for which exact formulas for the solutions of the KdV equations are known are localized solutions and solutions that are periodic in $x$.
The localized case was solved by the inverse scattering transform, and an interesting class of solutions are the N-soliton solutions, which occur when the Schr\"{o}dinger operator is reflectionless \cite{GGKM,ZMNP84}.
The periodic case was solved using algro-geometric methods instigated by Novikov \cite{No74}, Marchenko \cite{Mar74} and Lax \cite{Lax75}, and resolved independently by Dubrovin \cite{Dubrovin75,Dubrovin76}, Matveev--Its \cite{ItsMatv75a,ItsMatv75b,ItsMatv76,Mat75}, and McKean--van Morebeke \cite{MckMo75} in the important case of algebro-geometric finite gap solutions to the KdV equation (also see \cite{DN74,DMN76}).
Marchenko--Ostrovskii showed that the space of periodic finite gap solution is dense in the space of periodic solutions \cite{MaOs75}, and McKean--Trubowitz showed that the algebro-geometric methods extend to the general smooth periodic case \cite{MckTr76,MckTr78}.
An effective way of computing periodic finite gap solutions to the KdV equation that approximate smooth periodic solutions to the KdV equation arbitrarily well is to use the isoperiodic flows introduced by Grinevich--Schmidt to close all but a finite number of spectral gaps because the widths of the gaps decay exponentially at high energy \cite{GS95}. 
It should be noted that the algebro-geometric methods allow computation of interesting multi/quasi-periodic solutions that are outside the space of periodic solutions.
A Riemann--Hilbert problem approach to infinite gap periodic and quasi-periodic solutions to the KdV equation was discussed in \cite{MN19}.

Primitive solutions to the KdV equation were introduced in \cite{DZZ16,ZDZ16,ZZD16} as elements of the closure of the N-soliton solutions to the KdV equation with respect to the topology of uniform convergence in compact sets.
This space was originally considered by Marchenko \cite{Mar91}.
Primitive solutions were derived using the dressing method introduced by Zakharov--Manakov \cite{ZakharovManakov}.
The primitive solution method was also adapted to the Kaup--Broer system in \cite{NZ20}.
The potentials are determined by a pair of functions $R_1$ and $R_2$ called dressing functions, and the potentials can be computed via solving a system of singular integral equations.
An intriguing aspect of this initial work was that it allows computation of potentials that have the same spectra as the algebro-geometric finite gap potentials, but are not the usual algebro-geometric finite gap potentials.
These potentials can have either simple or doubly degenerate continuous spectra on the interiors of their spectral bands.
The algebro-geometric finite gap potentials can only have doubly degenerate spectra.
But even when the primitive potential has only a doubly degenerate continuous spectrum on the interiors of its spectral bands, the primitive potential need not have any periodicity or quasi-periodicity properties.
In \cite{DZZ16,ZDZ16,ZZD16,NZZ18} dressing functions for cnoidal waves were determined.
The main results of this paper were reviewed in \cite{DNZZ20}, but the details of this result are provided in this paper.

\begin{remark} \label{rmk:shifty}
To make the precise meaning of statements that follow, it is convenient to introduce some terminology.
We first note that if $u(x,t)$ solves the KdV equation then so does
\begin{equation} \label{eq:shifty} u(x-6 C t,t) + C\end{equation}
for any constant $C$.
We then use the terminology ``shifted N-soliton solution'' and ``shifted primitive solution'' to refer to solutions to the KdV equation that can be produced from the N-soliton solution and primitive solutions via the symmetry \eqref{eq:shifty}.
\end{remark}
 
In this paper we show that shifted N-soliton solutions are dense in the finite gap solutions with respect to the topology of uniform convergence in compact sets by computing dressing functions corresponding to the finite gap potentials.
The fact that the shifted N-soliton solutions are dense in the space of bounded periodic solutions to the KdV equation then follows from the fact that periodic finite gap potentials are a dense subset of the bounded periodic potentials \cite{MaOs75}.
The density of the N-soliton solutions in the algebro-geometric finite gap solutions and periodic solutions was proved by Marchenko \cite{Mar91}.
The primitive potential construction gives an effective method of taking this closure and producing explicit sequences of shifted N-soliton solutions converging to the algebro-geometric finite gap solutions.
The reverse limit of computing N-soliton solutions as algebro-geometric finite gap solutions in the limit in which the genus $N$ hyper-elliptic spectral curve of the N-gap solution degenerates to a rational curve is well known, and was completed by Novikov, Matveev, Its and Dubrovin \cite{DMN76,ItsMatv76,ZMNP84}.

The existence of algebro-geometric finite gap solutions as limits of shifted N-soliton solutions is also of interest because it shows the possibility of approximating hyper elliptic functions by rational functions away from the branch points/poles.
In essence, the meromorphic function theory on the hyper elliptic curve corresponding to the hyper elliptic function is approximated by holomorphic functions on rational curves with $N$ degenerate points.
This point of view is discussed in more detail in \cite{NZZ18}.

This paper is structured as follows.
\begin{itemize}
\item In section 2 we review the primitive potential theory, and state theorem \ref{thm:fingap}, which is the main result of this paper.
\item In section 3 we construct the KdV equation spectral curves in a manner that naturally produces coordinates that are useful in deriving the primitive potential integral equation.
\item In section 4 we construct the dressing functions for the algebro-geometric finite gap solutions directly from the Baker--Akhiezer function on the spectral curve.
\item In section 5 we numerically compute algebro-geometric finite gap solutions to the KdV equation via the primitive solution system of singular integral equations for small genus.
We also numerically compute primitive solutions that have the same spectral properties of the finite gap solutions, but are not finite gap solutions.
\item In section 6 we provide some concluding remarks.
\end{itemize}

\section{Finite Gap Solutions as Primitive Potentials}

Let $R_1$ and $R_2$ be nonnegative H\"{o}lder continuous real functions.
Primitive solutions to the KdV equation are defined in terms of the solution $f,g$ to the system of singular integral equations
\begin{align}
f(p,x,t)+\frac{R_1(p)}{\pi}e^{-2px+8p^3t}\left[\int_{k_1}^{k_2} \frac{f(q,x,t)}{p+q}dq+\fint_{k_1}^{k_2}\frac{g(q,x,t)}{p-q}dq\right]=R_1(p)e^{-2px+8p^3x}, \label{eq:f} \\
g(p,x,t)+\frac{R_2(p)}{\pi}e^{2px-8p^3t}\left[\fint_{k_1}^{k_2} \frac{f(q,x,t)}{p-q}dq+\int_{k_1}^{k_2}\frac{g(q,x,t)}{p+q}dq\right]=-R_2(p)e^{2px-8p^3t} \label{eq:g}
\end{align}
for each fixed value of $x,t$.
A $t$ dependent family $u(x,t)$ of primitive potentials can be constructed as
\begin{equation} u(x,t) = \frac{2}{\pi} \frac{\partial}{\partial x} \int_{k_1}^{k_2} f(q,x,t) + g(q,x,t) dq, \label{eq:u} \end{equation}
such that
\begin{equation} \psi(k,x,t) = e^{-i k x-4ik^3t} \left( 1 + \frac{i}{\pi} \int_{k_1}^{k_2} \frac{f(q,x,t)}{k - iq} dq + \frac{i}{\pi} \int_{k_1}^{k_2} \frac{g(q,x,t)}{k + iq} dq \right) \end{equation}
solves the Schr\"{o}dinger equation
\begin{equation} -\psi_{xx}(k,x,t) + u(x,t) \psi(k,x,t) = k^2 \psi(k,x,t) \end{equation}
for complex $k^2 \notin \sigma(-\partial_x^2+u(x,t)) \cap [-k_2^2,-k_1^2]$, and $u(x,t)$ solves the KdV equation (\ref{eq:KdV}).

The solutions $\psi(k,x,t)$ with $k \in \mathbb{R}$ are physical solutions of the Schr\"{o}dinger equation with energy $k^2$ (by physical solutions we mean bounded solutions).
The boundary values of $\psi_\pm(k,x,t)$ for $k^2 \in \sigma(-\partial_x^2 + u(x)) \cap [-k_2^2,-k_1^2]$ when $k^2$ is a doubly degenerate point in the continuous spectrum allows computation of the other physical solutions with either $\psi_\pm(ip,x,t)$ or $\psi_\pm(-ip,x,t)$ giving a basis of 2 linearly independent improper eigenfunctions (on endpoints of the spectrum, the physical solution can be determined by the singular behavior of $\psi$).
Alternatively, we may take $\varphi^+(p,x,t) = e^{px-4p^3t} f(p,x,t)$ and $\varphi^-(p,x,t) = e^{-px+4p^3t}g(p,x,t)$ as a basis of physical solutions with energy $-p^2$ as discussed in \cite{DZZ16,ZDZ16,ZZD16}.
In the case where either $R_1(p) = 0$ or $R_2(p) = 0$ for $-p^2 \in [-k_2^2,-k_1^2] \cap \sigma(-\partial_x^2+u(x,t))$ the continuous spectrum is simple and the nonzero choice of $\varphi(p,x,t) = e^{px-4p^3t} f(p,x,t)$ or $\varphi(p,x,t) = e^{-px+4p^3t}g(p,x,t)$ gives the single physical solution.

One can use uniform grids with $N/2$ points to discretize the $p$ dependance of $f$ and $g$ with the uniform grid for $g$ staggered relative to the uniform grid for $f$ and approximate the integrals in (\ref{eq:f},\ref{eq:g}) using Riemann sums; in this case the system of singular integral equations becomes a finite dimensional linear system and the dressing method gives an exact N-soliton solution to the KdV equation \cite{DZZ16,ZDZ16,ZZD16}.
Therefore, we can approximate a primitive solution to the KdV equation as accurately as we desire in a compact subset of the $x,t$-plane by an N-soliton solution to the KdV equation in the above manner by making $N$ sufficiently large.

In previous papers on primitive solutions to the KdV equation, $R_1$ and $R_2$ were assumed to be nonnegative H\"{o}lder continuous functions on $[k_1,k_2]$ \cite{DZZ16,ZDZ16,ZZD16,NZZ18}.
However, we must now weaken this to the assumption that $R_1$ and $R_2$ are nonnegative and H\"{o}lder continuous on their supports.
This allows $R_1$ and $R_2$ to have jump discontinuities between positive numbers and $0$ on the interval $[k_1,k_2]$.

\begin{theorem} \label{thm:fingap}
Consider an increasing sequence $\{\kappa_j\}_{j = 1}^g$ with
\begin{equation} \label{eq:kcondition} 0 < k_1 < \kappa_1 < \kappa_2 < \dots < \kappa_{2g} < k_2 < \infty. \end{equation}
Let $u(x,t)$ be an algebro-geometric finite gap solution to the KdV equation such that at each fixed time $t$ the potential $u(x,t)$ is an algebro-geometric finite gap potential with spectrum
\begin{equation} \label{eq:spec} \sigma(-\partial_x^2 + u(x,t)) = [-\kappa_{2g}^2,-\kappa_{2g-1}^2] \cup \dots \cup [-\kappa_4^2, -\kappa_3^2] \cup [-\kappa_2^2, -\kappa_1^2] \cup [0,\infty).  \end{equation}
Then $u(x,t)$ is the primitive solution determined by
\begin{align} \label{eq:agpsR1} & R_1(p) = \exp\left( \sum_{j=1}^g a_j p^{2j-1} \right) \sum_{\ell=1}^g \mathbbm{1}_{[\kappa_{2\ell-1}, \kappa_{2 \ell}]} (p),  \\
\label{eq:agpsR2}& R_2(p) = \exp\left( -\sum_{j=1}^g a_j p^{2j-1} \right) \sum_{\ell=1}^g \mathbbm{1}_{[\kappa_{2\ell-1}, \kappa_{2 \ell}]} (p),   \end{align}
where $a_j$ are real constants, and $\mathbbm{1}_{[\kappa_{2 \ell-1},\kappa_{2 \ell}]}$ is the indicator function of $[\kappa_{2 \ell-1},\kappa_{2 \ell}]$.
Conversely, if $u(x,t)$ is a primitive solution to the KdV equation determined by dressing functions $R_1$ and $R_2$ given by \eqref{eq:agpsR1} and \eqref{eq:agpsR2} for some choice of real $\{a_j\}_{j = 1}^g$ and $\{\kappa_j\}_{j=1}^{2g}$ satisfying \eqref{eq:kcondition}, then $u(x,t)$ is an algebro-geometric finite gap solution with spectrum of the form \eqref{eq:spec}.
\end{theorem}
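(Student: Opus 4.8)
The plan is to pass through the Baker--Akhiezer function of the finite gap potential and to recognize the primitive-solution integral equations \eqref{eq:f}--\eqref{eq:g} as the inverse problem that this function solves. First I would recall that, by the dressing-method derivation of \cite{DZZ16,ZDZ16,ZZD16}, solving \eqref{eq:f}--\eqref{eq:g} and forming \eqref{eq:u} amounts to producing the function $\psi(k,x,t)$ that is analytic in $k\in\mathbb{C}$ off the segments $\pm i[\kappa_{2\ell-1},\kappa_{2\ell}]$, $\ell=1,\dots,g$, satisfies $\psi\,e^{ikx+4ik^3t}\to 1$ as $k\to\infty$ and the Schr\"{o}dinger equation at energy $k^2$, and has a jump across those segments built from $R_1$ and $R_2$; the potential is then recovered by the trace formula $u=-2i\,\partial_x\lim_{k\to\infty}k\,(\psi\,e^{ikx+4ik^3t}-1)$, which is exactly \eqref{eq:u} once the Cauchy integrals in the representation of $\psi$ are expanded at $k=\infty$. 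On the bands the jump is carried by the physical eigenfunctions of the paragraph preceding the theorem through $\varphi^+=e^{px-4p^3t}f$ and $\varphi^-=e^{-px+4p^3t}g$. So it suffices to show that the Baker--Akhiezer function of $u$ is this $\psi$ with the stated $R_1,R_2$, and conversely that the data \eqref{eq:agpsR1}--\eqref{eq:agpsR2} force $\psi$ to be such a Baker--Akhiezer function.

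For the forward direction I would set up the genus $g$ hyperelliptic spectral curve $\Gamma$ of $u$, a double cover of the $E$-plane branched over $E\in\{0,-\kappa_1^2,\dots,-\kappa_{2g}^2\}$ and $E=\infty$, with the finite bands $[-\kappa_{2\ell}^2,-\kappa_{2\ell-1}^2]$ of \eqref{eq:spec} parametrized by $p\in[\kappa_{2\ell-1},\kappa_{2\ell}]$ on the two sheets $P^\pm(p)$, and take $\psi_{\mathrm{BA}}(P,x,t)$ to be the Baker--Akhiezer function: the function on $\Gamma$ with a single essential singularity $e^{-ikx-4ik^3t}(1+O(1/k))$ over $E=\infty$ (with $k$ the local parameter, $k^2=E$) and pole divisor the degree $g$ Dirichlet divisor of $u(\cdot,0)$. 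Since the branching of $\Gamma$ over $E=0$ is resolved by the substitution $E=k^2$ while the branching over $E=-\kappa_j^2$ is not, the chosen branch of $\psi_{\mathrm{BA}}$ is single-valued and analytic in $k$ off the segments $\pm i[\kappa_{2\ell-1},\kappa_{2\ell}]$, hence a candidate solution of the problem above. Its jump across a finite band I would compute from the hyperelliptic involution $\tau$ (which exchanges $P^\pm$) and the reality $\overline{u}=u$ (which on the continuous spectrum forces $\overline{\psi_{\mathrm{BA}}(P)}=\psi_{\mathrm{BA}}(\tau P)$): this should produce exactly the jump occurring in \eqref{eq:f}--\eqref{eq:g}, with $R_1(p)R_2(p)$ equal to the indicator of $\bigcup_\ell[\kappa_{2\ell-1},\kappa_{2\ell}]$, reflecting the unimodularity of the Bloch monodromy on the spectrum of a finite gap potential, and $R_1(p)/R_2(p)=e^{2h(p)}$ for a real function $h$ on the bands. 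The crucial point is the form of $h$: the extra involution $p\mapsto -p$ (under which $E=-p^2$ is invariant) interchanges the two sheets and hence swaps $R_1$ with $R_2$, so $h$ is odd, and examining the divisor of $\psi_{\mathrm{BA}}(P)/\psi_{\mathrm{BA}}(\tau P)$ --- zeros and poles the Dirichlet divisor and its $\tau$-image, with controlled behavior at the $2g+1$ finite branch points and at $\infty$ --- should show that $h$ must be an odd polynomial of degree at most $2g-1$, that is $h(p)=\sum_{j=1}^g a_j p^{2j-1}$, the $g$ coefficients parametrizing the $g$-dimensional isospectral torus of Dirichlet divisors. Uniqueness of the solution of the inverse problem then identifies the primitive solution with data \eqref{eq:agpsR1}--\eqref{eq:agpsR2} as $u$; no translate in the sense of Remark~\ref{rmk:shifty} enters because the semi-infinite band of \eqref{eq:spec} is normalized to begin at $0$, so the trace formula returns $u$ with the correct constant.

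For the converse I would start from $R_1,R_2$ of the form \eqref{eq:agpsR1}--\eqref{eq:agpsR2}, and first establish existence and uniqueness of a solution $(f,g)$ of \eqref{eq:f}--\eqref{eq:g}; this is where the weakened hypothesis ``nonnegative and H\"{o}lder continuous on the support'' is needed so that the indicator factors are admissible, and solvability should follow from Fredholm theory together with the positivity $R_1,R_2\ge 0$, the operator differing from the already-treated one with continuous dressing functions only by jump discontinuities between positive values and $0$. Forming $u$ by \eqref{eq:u} and $\psi$ by its representation, the jump relations inherited by $\psi$ from \eqref{eq:f}--\eqref{eq:g} with this $R_1,R_2$ are precisely the monodromy conditions of a single-valued function on the genus $g$ curve $\Gamma_{\{\kappa_j\}}$ branched over $\{0,-\kappa_1^2,\dots,-\kappa_{2g}^2,\infty\}$ with a Baker--Akhiezer essential singularity at infinity and a degree $g$ pole divisor determined by $\{a_j\}$ --- the polynomial form of $h$ being exactly what makes these relations consistent with the branch structure of $\Gamma_{\{\kappa_j\}}$ --- so by the characterization of the Baker--Akhiezer function $u(\cdot,t)$ is the algebro-geometric finite gap potential attached to $\Gamma_{\{\kappa_j\}}$ with that Dirichlet divisor, whose spectrum is \eqref{eq:spec}.

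The step I expect to be the main obstacle is the determination of $h$ in the forward direction, and in particular the surjectivity hidden in it: that as $\{a_j\}$ ranges over $\mathbb{R}^g$ the phases $\sum_{j=1}^g a_j p^{2j-1}$ realize every Dirichlet divisor on $\Gamma$, equivalently every point of the isospectral torus, which requires identifying the map from phases to divisors with the Abel map and verifying that it is onto, together with careful bookkeeping of normalization constants so that the reconstructed potential is exactly $u$ and not a translate. A secondary technical difficulty, already anticipated by the remarks preceding the theorem, is the solvability of the singular integral system for the discontinuous dressing functions \eqref{eq:agpsR1}--\eqref{eq:agpsR2}.
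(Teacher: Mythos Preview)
Your outline has the right global shape---pass to the Baker--Akhiezer function, read off the jump, match it to the primitive-solution Riemann--Hilbert problem---but there is a genuine gap at the point where you assert that the chosen branch of $\psi_{\mathrm{BA}}$ is ``single-valued and analytic in $k$ off the segments $\pm i[\kappa_{2\ell-1},\kappa_{2\ell}]$.'' It is not. The pole divisor $\boldsymbol{\delta}$ of $\psi_{\mathrm{BA}}$ lies on the real ovals $O_j$, which in the $k$-coordinate sit in the \emph{gaps} $i(\kappa_{2j-2},\kappa_{2j-1})\cup i(-\kappa_{2j-1},-\kappa_{2j-2})$, not on the bands. So $e^{ikx+4ik^3t}\psi_{\mathrm{BA}}$ has honest poles away from the cuts and cannot serve as the $\chi$ of the primitive-solution RHP. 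Relatedly, the bare jump of $\psi_{\mathrm{BA}}$ across the cuts in the $k$-plane is trivial (the two boundary values coincide with the values on the two sheets at the same point of $\Sigma$, cf.\ \eqref{eq:BAjump}), so the nontrivial $R_1,R_2$ structure cannot be read off from $\psi_{\mathrm{BA}}$ alone in the way you describe. Your proposed identification of $h$ via ``the divisor of $\psi_{\mathrm{BA}}(P)/\psi_{\mathrm{BA}}(\tau P)$'' runs into the same obstruction: on the bands that ratio carries the quasimomentum, an Abelian integral of the second kind, and its logarithm is not a polynomial in $p$.

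The paper repairs exactly this by introducing two multiplicative corrections. First, the auxiliary Baker--Akhiezer function $\psi_{aux}$ of Trogdon--Deconinck, with zero divisor $\boldsymbol{\delta}$ and pole divisor the Weierstrass points $\gamma_j=\langle i\kappa_{2j-1}\rangle$, cancels the gap poles of $\psi_{\mathrm{BA}}$ and relocates them to band edges; its essential singularity at $\infty$ is $e^{-i\alpha(k,\boldsymbol{\delta})}$ with $\alpha(k,\boldsymbol{\delta})=\sum_{j=1}^g t_j(\boldsymbol{\delta})k^{2j-1}$, and it is precisely this $\alpha$---not any divisor argument---that produces the odd polynomial in the exponent of $R_1,R_2$. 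Second, the scalar $\xi(k)=\prod_j\bigl((k^2+\kappa_{2j-1}^2)/(k^2+\kappa_{2j}^2)\bigr)^{1/4}$ converts the resulting square-root poles at the Weierstrass points into the $|k\mp i\kappa_j|^{-1/4}$ growth required by the RHP, and simultaneously supplies the $\pm i\,\mathrm{sgn}(p)$ factor in the jump that makes $R_1R_2\equiv 1$ on the bands. The function $\chi$ in \eqref{eq:chi} then satisfies the RHP exactly, and expanding its Cauchy-integral representation yields \eqref{eq:f}--\eqref{eq:g} with $a_j=(-1)^j 2t_j(\boldsymbol{\delta})$. The surjectivity you flag as the main obstacle is handled not by an Abel-map argument but by the invertibility of the matrix $\bigl(\Omega_j^{(2\ell-1)}\bigr)$ in \eqref{eq:auxcond}, established in \cite{TD13}: this guarantees that every Dirichlet divisor $\boldsymbol{\delta}$ determines, and is determined by, a real vector $(t_1,\dots,t_g)$, hence $(a_1,\dots,a_g)$.
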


All of the choices of dressing functions $R_1$ and $R_2$ of the form (\ref{eq:agpsR1},\ref{eq:agpsR2}) produce finite gap solutions. These are only finite gap solutions because there are only terms involving powers of $j$ for $j = 1,3,\dots 2g-3,2g-1$ appearing in the exponents. This result is due to the invertibility of the matrix $\Omega$ appearing in (\ref{eq:auxcond}). If any additional terms were added for example, the matrix $\Omega$ would not be invertible. 
Most other choices will likely lead to a solution that is not finite gap, but is likely asymptotically equivalent to a finite gap solutions.
In the one gap case, this hypothesis is supported by the rigorous analysis in \cite{GGJM18}.

\begin{remark}
Theorem \ref{thm:fingap} implies that:
\begin{enumerate}
\item Any algebro-geometric finite gap potential can be realized as a shifted primitive solution (recall that a shifted primitive solution was defined in remark \ref{rmk:shifty}).
\item Any shifted primitive solution determined by dressing functions of the form \eqref{eq:agpsR1} and \eqref{eq:agpsR2} is an algebro-geometric finite gap potential.
\end{enumerate}
\end{remark}

Theorem \ref{thm:fingap} is proven by starting with a Baker--Akheizer function expressed in a convenient coordinate system on the spectral curve, and then deriving a system of integral equations of the form \eqref{eq:f}, \eqref{eq:g}.
All potentials formed by the choice of dressing function from theorem \ref{thm:fingap} are related to the solution determined by
\begin{align} \label{eq:constdress1} & R_1(p) =  \sum_{\ell=1}^g \mathbbm{1}_{[\kappa_{2\ell-1}, \kappa_{2 \ell}]} (p),  \\
\label{eq:constdress2} & R_2(p) = \sum_{\ell=1}^g \mathbbm{1}_{[\kappa_{2\ell-1}, \kappa_{2 \ell}]} (p), \end{align}
by a transformation that maintains the period of any periodic primitive potential.

Space and time translations are sufficient to transform any two-gap potential into the primitive potential corresponding to dressing functions of the form (\ref{eq:constdress1},\ref{eq:constdress2}).
The primitive potentials corresponding to dressing functions (\ref{eq:constdress1},\ref{eq:constdress2}) are examples of symmetric primitive potentials and primitive solutions discussed in \cite{NZZ18}.
The method for computing the Taylor coefficients of these solutions about $(x,t) = (0,0)$ in the one-gap case discussed in \cite{NZZ18} can be easily adapted to apply to the primitive solutions to the KdV equation determined by the dressing functions of the form (\ref{eq:constdress1},\ref{eq:constdress2}).


\section{Baker--Akheizer Function on the ``k-Plane''}

Consider the curve $\tilde \Sigma$ defined by
\begin{equation} w^2 = \prod_{j = 1}^{2g} (k^2 + \kappa_j^2) = P_{4g}(k). \end{equation}
Also, consider the involution $\iota(k,w) = (-k,-w)$.
The genus of $\tilde \Sigma$ is $g' = 2g-1$. 
The spectral curve for the KdV equation can then be formed as the quotient space $\Sigma = \tilde \Sigma/\left<\iota\right>$.
The surface $\tilde \Sigma$ is the double cover of $\Sigma$.
The curve $\Sigma$ is a Riemann surface with genus $g$ and is homeomorphic to the curved defined by
\begin{equation} u^2 = v \prod_{j = 1}^{2g} (v + \kappa_j^2),  \end{equation}
which is the standard representation of the KdV spectral curve.
However, the coordinates $k$ produced from a single sheet of the double cover are more natural when comparing to the primitive solutions.
If $A \subset \mathbb{C}$ we will use the notation $iA = \{ik : k \in A\}$.
Let us put the branch cuts of $\sqrt{P_{4g}(k)}$ on $i \Gamma$ where
\begin{equation} \Gamma = \bigcup_{j = 1}^g [-\kappa_{2j},-\kappa_{2j-1}] \cup [\kappa_{2j-1},\kappa_{2j}] \end{equation}
is oriented from left to right (so $i \Gamma$ is oriented from down to up).
We will abuse notation, and use $k$ to represent both the complex number $k \in \mathbb{C} \setminus i \Gamma$ and the corresponding point on $\Sigma$.
We will use the notation $\left< i \kappa_j \right>$ to indicate the Wierstrass point on $\Sigma$ as $k$ approaches $\pm i \kappa_j$.

The Abelian differentials of the first kind on $\Sigma$ can be produced by computing explicitly the above homeomorphism, and then computing a pull back.
However, we can also compute them using the above construction by computing the holomorphic differentials on $\tilde \Sigma$ and then finding those that are invariant under $\iota$.
A holomorphic differential $\omega$ on $\tilde \Sigma$ is uniquely expressed as
\begin{equation} \eta = \sum_{n = 1}^{2g-1} c_n \frac{k^{n-1}}{\sqrt{P_{4g}(k)}} dk, \end{equation}
and so $\iota$ acts as
\begin{equation} \iota^* \eta = \sum_{n = 1}^{2g-1} c_n \frac{(-1)^{n-1} k^{n-1}}{\sqrt{P_{4g}(k)}} dk. \end{equation}
Therefore, $\eta$ is invariant if and only if $c_{2j} = 0$ for $j = 1,2,\dots,g-1$.
This means that a basis of Abelian differentials of the first kind on $\Sigma$ is
\begin{equation} \eta_j = \frac{k^{2j-2}dk}{\sqrt{P_{4g}(k)}} \end{equation}
for $j = 1,2,\dots,g$.
The basis of Abelian differentials of the first kind on $\Sigma$ is $g$ dimensional so $\Sigma$ has genus $g$.
Let $\boldsymbol{\eta}$ be the $g$ dimensional vector of differentials with entries $\eta_j$.

We now introduce a canonical homology basis $\{a_j,b_j\}_{j = 1}^g$ for $H_1(\Sigma)$ satisfying $a_i \circ b_j = \delta_{ij}$, $a_i\circ a_j = 0$ and $b_i \circ b_j = 0$, where $\circ$ indicates the minimal intersection number for homology elements.
With the homology basis in hand, we can compute the period matrix $M$ with entries
\begin{equation} M_{ij} = \int_{a_j} \eta_i. \end{equation}
Then the entries $\omega_j$ of
\begin{equation} \boldsymbol{\omega} =2 \pi i M^{-1} \boldsymbol{\eta}\end{equation}
form a basis of Abelian differentials of the first kind on $\Sigma$ normalized by
\begin{equation} \int_{a_j} \omega_i = 2 \pi i \delta_{ij}. \label{eq:normad} \end{equation}
This basis of Abelian differentials of the first kind and the choice of base point at $\infty$ allows us to compute:
\begin{enumerate}
\item The Abel map $\mathbf{A}(k)$ with entries
\begin{equation} \mathbf{A}(k) = \int_\infty^k \boldsymbol{\omega}\end{equation}
mapping $\Sigma$ into the Jacobi variety. The discontinuities on the $k$-coordinate expression of the Abel map $\mathbf{A}(k)$ for $k \in i \Gamma$ correspond to $g$ disjoint circles in the Jacobi variety.
\item The Abel map on degree $g$ divisors $\boldsymbol{\delta}$ is
\begin{equation} \mathbf{A}(\boldsymbol{\delta}) = \sum_{j = 1}^g \mathbf{A}(\delta_j).\end{equation}
\item The Riemann matrix $B$ with negative definite real part
\begin{equation} B_{ij} = \int_{b_j} \omega_i. \end{equation}
\item The vector of Riemann constants $\mathbf{K}$ with entries
\begin{equation} K_j =  \frac{2 \pi i + B_{jj}}{2} - \frac{1}{2 \pi i} \sum_{\ell \ne j} \int_{a_{\ell}} A_j(k) \omega_{\ell}. \end{equation}
\end{enumerate}

We define a coordinate $\zeta = z^{-1}$ for $\Sigma$ at $k=\infty$.
Let $\omega^{(n)}$ be Abelian differentials of the second kind on $\Sigma$ with poles at $\infty$ with principle parts
\begin{equation} \omega^{(n)} \sim dk^n = d \zeta^{-n} = -n\zeta^{-n-1} d\zeta  \end{equation}
and
\begin{equation} \int_{a_j} \omega^{(n)} = 0. \end{equation}
This Abelian differential has the form
\begin{equation}  \omega^{(n)} = \frac{n k^{2g-1+n}}{\sqrt{P_{4g}(k)}} + \sum_{j =1}^g c_j \omega_j. \end{equation}
An important aspect of $\omega^{(n)}$ is the vector $\boldsymbol{\Omega}^{(n)}$ with entries
\begin{equation} \Omega_j^{(n)} = \int_{b_j} \omega^{(n)}. \end{equation}

The Riemann theta function $\theta: \mathbb{C}^g \to \mathbb{C}$ is defined by
\begin{equation} \theta(\mathbf{z},B) = \sum_{\mathbf{n} \in \mathbb{Z}^g} \exp \left( {\frac{1}{2} \mathbf{n} \cdot B \mathbf{n} + \mathbf{n} \cdot \mathbf{z}} \right). \end{equation}
The Riemann theta function converges uniformly because $B$ has negative definite real part.

The sets $O_1 = i(-\kappa_1,\kappa_1) \cup \left<i\kappa_1\right>$ and
\begin{equation} O_j = i (-\kappa_{2j-1},-\kappa_{2j-2}) \cup i(\kappa_{2j-2},\kappa_{2j-1}) \cup \left<i \kappa_{2j-2}\right> \cup \left<i \kappa_{2j-1}\right> \end{equation}
for $j=2,\dots,g$ form a collection of $g$ real ovals of $\Sigma$.
Consider a degree $g$ divisor $\boldsymbol{\delta} \in \Sigma^g$ consisting of the direct sums of points $\delta_j \in O_j$.
When $\Sigma$ is the spectral curve for a periodic solution to the KdV equation, then $\boldsymbol{\delta}$ is the Dirichlet divisor of the initial condition.

The Baker--Akhiezer function is the unique function on $\Sigma$ with pole divisor $\boldsymbol{\delta}$ and an asymptotic behavior at $\infty$ of the form
\begin{equation} \psi(k,x,t) = e^{-ikx-4ik^3 t}(1  + O(k^{-1})). \label{eq:BAnorm} \end{equation}
The function $\psi(k,x,t)$ has the explicit formula
\begin{equation} \psi(k,x,t) = \exp\left( -i \int_\infty^k \omega^{(1)} x - 4 i\int_\infty^k \omega^{(3)} t \right) \frac{\theta(\mathbf{A}(k) - \mathbf{A}(\boldsymbol{\delta}) - i \boldsymbol{\Omega}^{(1)} x - 4 i \boldsymbol{\Omega}^{(3)} t - \mathbf{K}, B)\theta( - \mathbf{A}(\boldsymbol{\delta}) - \mathbf{K}, B)}{\theta(\mathbf{A}(k) - \mathbf{A}(\boldsymbol{\delta}) - \mathbf{K}, B)\theta( - \mathbf{A}(\boldsymbol{\delta}) - i \boldsymbol{\Omega}^{(1)} x - 4 i \boldsymbol{\Omega}^{(3)} t  - \mathbf{K}, B)}  \end{equation}
in the $k$-coordinate, and for each $t$ solves the Schr\"{o}dinger equation
\begin{equation} -\psi_{xx}(k,x,t) + u(x,t) \psi(k,x,t) = k^2 \psi(k,x,t) \end{equation}
where $u(k,x,t)$ is given by the Matveev--Its formula \cite{ZMNP84}
\begin{equation} u(x,t) = -2 \frac{\partial^2}{\partial x^2} \theta(-i \boldsymbol{\Omega}^{(1)}x - 4 i \boldsymbol{\Omega}^{(3)} t - \mathbf{A}(\boldsymbol{\delta}) - \mathbf{K}, B). \end{equation}

We make use of the auxiliary Baker--Akhiezer function introduced by Trogdon and Deconinck \cite{TD13} $\psi_{aux}(k)$ with zero divisor $\boldsymbol{\delta}$, pole divisor $\boldsymbol{\gamma}$ consisting of the direct sum of the points $\gamma_j = \left<i \kappa_{2j-1}\right>$, and asymptotic behavior of the form
\begin{equation} \psi_{aux}(k,\boldsymbol{\delta}) = e^{-i \alpha(k,\boldsymbol{\delta})}(1 + O(k^{-1})), \quad \alpha(k,\boldsymbol{\delta}) = \sum_{j = 1}^g t_j (\boldsymbol{\delta}) k^{2j-1}. \end{equation}
Trogdon and Deconinck show that $t_j(\boldsymbol{\delta})$ are real constants determined by solving the linear equation
\begin{equation} \sum_{\ell = 1}^g \Omega_j^{(2\ell-1)} t_\ell (\boldsymbol{\delta}) \equiv A_j(\boldsymbol{\delta}) - A_j(\boldsymbol{\gamma}),  \label{eq:auxcond} \end{equation}
where $\mathbf{A}$ extends to divisors by adding the evaluations of $\mathbf{A}$ on the points in the divisor and $\equiv$ represents equivalence on the Jacobian variety of $\Sigma$; in particular, the matrix $\Omega_j^{(2\ell-1)}$ is invertible \cite{TD13}.
The auxiliary Baker--Akheizer function has the explicit form
\begin{equation} \psi_{aux}(k,\mathbf{t}) = \exp\left( -i \int_\infty^k \sum_{j = 1}^g \omega^{(j)} t_j \right) \frac{\theta\left(\mathbf{A}(k) - \mathbf{A}(\boldsymbol{\delta}) - \mathbf{K} - i \displaystyle{\sum_{j = 1}^g} \boldsymbol{\Omega}^{(j)} t_j , B\right)\theta( - \mathbf{A}(\boldsymbol{\delta}) - \mathbf{K}, B)}{\theta(\mathbf{A}(k) - \mathbf{A}(\boldsymbol{\delta}) - \mathbf{K}, B)\theta\left( - \mathbf{A}(\boldsymbol{\delta}) - \mathbf{K} - i \displaystyle{\sum_{j = 1}^g} \boldsymbol{\Omega}^{(j)} t_j  , B\right)}  \end{equation}
where $\mathbf{t}$ is the $g$ dimesional vector with entries $t_n$.

The intuition behind the auxiliary Baker--Akheizer function is that it tells us how to evolve the initial potential along the higher KdV equation flows until the auxiliary spectral data corresponding to poles of the Baker--Akheizer function lie on $\{\left<i\kappa_{2j-1}\right>\}_{j = 1}^g$.
Alternatively, we can solve the higher Dubrovin equations to evolve the spectral poles of the Baker--Akheizer function.
The right-hand side of equation \eqref{eq:auxcond} differs from the corresponding version in \cite{TD13} by a sign because our asymptotic behavior as $k \to \infty$ used to normalize the Baker--Akheizer functions differs from Trogdon and Deconinck by a sign.

\section{From the Baker--Akheizer Function to the Primitive Potential }

\begin{proposition} \label{prop:RHP}
The function
\begin{equation} \chi(k,x,t) = \xi(k) e^{i\alpha(k, \boldsymbol{\delta})+ikx+4ik^3t} \psi_{aux}(k, \boldsymbol{\delta}) \psi(k,x,t), \label{eq:chi} \end{equation}
where
\begin{equation} \xi(k) = \prod_{j = 1}^g \left( \frac{k^2+\kappa_{2j-1}^2}{k^2+\kappa_{2j}^2} \right)^{\frac{1}{4}}, \end{equation}
solves the following nonlocal scalar Riemann--Hilbert problem:
\end{proposition}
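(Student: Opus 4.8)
The plan is to verify in turn the defining conditions of the stated Riemann--Hilbert problem for the function $\chi$ of \eqref{eq:chi}: sectional analyticity of $\chi(\cdot,x,t)$ on $\mathbb{C}\setminus i\Gamma$, the normalization $\chi=1+O(k^{-1})$ at $k=\infty$, admissible (integrable) behaviour at the branch points $\pm i\kappa_j$ and at $k=0$, and the nonlocal jump relation across $i\Gamma$. For the first three I would argue factor by factor in \eqref{eq:chi}. The prefactor $e^{i\alpha(k,\boldsymbol{\delta})+ikx+4ik^3t}$ is entire and nowhere zero, and $\xi(k)$, with the fourth-root cuts of $(k^2+\kappa_{2j-1}^2)/(k^2+\kappa_{2j}^2)$ placed on $i\Gamma$, is analytic and nonvanishing on $\mathbb{C}\setminus i\Gamma$ with $\xi=1+O(k^{-2})$ at $\infty$. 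Since all $4g$ branch points $\pm i\kappa_j$ of $\sqrt{P_{4g}}$ are joined in pairs by the cuts forming $i\Gamma$, a single branch of $\sqrt{P_{4g}}$ is globally defined on $\mathbb{C}\setminus i\Gamma$, so $\psi(\cdot,x,t)$ and $\psi_{aux}(\cdot,\boldsymbol{\delta})$ restrict there to single-valued meromorphic functions; for generic spectral data the pole divisor $\boldsymbol{\delta}$ of $\psi$ lies in the open real ovals $O_j$, which miss $i\Gamma$, and is cancelled by the zero divisor $\boldsymbol{\delta}$ of $\psi_{aux}$, so that $\psi_{aux}\psi$ is holomorphic on $\mathbb{C}\setminus i\Gamma$ apart from the Weierstrass points $\langle i\kappa_{2j-1}\rangle$ in the pole divisor $\boldsymbol{\gamma}$ of $\psi_{aux}$, which lie on $i\Gamma$. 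Combining these facts gives analyticity of $\chi$ off $i\Gamma$, with degenerate divisor configurations covered by continuity in $(x,t)$; and $\chi=1+O(k^{-1})$ at $\infty$ because the prefactor cancels both essential singularities of $\psi_{aux}\psi$, leaving $e^{i\alpha+ikx+4ik^3t}\psi_{aux}\psi=1+O(k^{-1})$ and $\xi=1+O(k^{-2})$.

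For the endpoint behaviour, near $\langle i\kappa_{2\ell-1}\rangle$ the local coordinate $\tau$ satisfies $\tau^2\propto k-i\kappa_{2\ell-1}$; there $\psi$ is finite, $\psi_{aux}$ has a simple pole $\sim\tau^{-1}$, and the factor $(k^2+\kappa_{2\ell-1}^2)^{1/4}\propto\tau^{1/2}$ in $\xi$ leaves $\chi=O((k-i\kappa_{2\ell-1})^{-1/4})$; near $\langle i\kappa_{2\ell}\rangle$, $\psi$ and $\psi_{aux}$ are both finite and the factor $(k^2+\kappa_{2\ell}^2)^{-1/4}\propto\tau^{-1/2}$ again yields $\chi=O((k-i\kappa_{2\ell})^{-1/4})$; both are integrable, and only improve in degenerate configurations. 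At $k=0$, a Weierstrass point of $\Sigma$ at which $k$ is itself a local coordinate, $\xi$, $\psi$, $\psi_{aux}$ are all finite and nonzero, so $\chi$ is regular there. Thus $\xi$ is precisely what tames the otherwise nonintegrable square-root singularities at the band edges.

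The substantive step is the jump on $i\Gamma$. For $k_0=ip\in i\Gamma$, crossing the cut at $k_0$ exchanges the two sheets of $\tilde\Sigma$ over $k_0$, which on $\Sigma$ is the exchange of the two points $P,\widehat{P}$ over the energy $-p^2$ (the hyperelliptic involution), so $\psi_\pm(k_0)$ and $\psi_{aux,\pm}(k_0)$ are the values of $\psi,\psi_{aux}$ at $P$ and $\widehat{P}$, the two Bloch solutions at that energy. I would then compute the jump of $\chi$ factor by factor. The entire prefactor $e^{i\alpha(k,\boldsymbol{\delta})+ikx+4ik^3t}$ has no jump at $k_0$, but because $\alpha(k,\boldsymbol{\delta})$, $kx$ and $k^3t$ are odd in $k$, its values at $k_0$ and at the reflected point $-k_0$ differ by an explicit exponential, which produces the weights $e^{\mp2px\pm8p^3t}$ of \eqref{eq:f}, \eqref{eq:g} and, once $a_j$ is identified with $t_j(\boldsymbol{\delta})$ up to an explicit sign and power of $i$ — equivalently, with the higher KdV times that carry $\boldsymbol{\delta}$ to $\boldsymbol{\gamma}$ through \eqref{eq:auxcond} — the dressing functions $R_1,R_2$ of \eqref{eq:agpsR1}, \eqref{eq:agpsR2}. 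The branch jump of $\xi$ across $i[\kappa_{2\ell-1},\kappa_{2\ell}]$ is multiplication by a fourth root of unity (only the $\ell$th factor is cut there), and $\xi_+\xi_-$ combined with the Bloch ratio $\psi_+\psi_{aux,+}/(\psi_-\psi_{aux,-})$ assembles into the stated jump kernel. The analytic core is the transformation of the theta quotients defining $\psi$ and $\psi_{aux}$: crossing $i[\kappa_{2\ell-1},\kappa_{2\ell}]$ shifts the Abel argument $\mathbf{A}(k)$ by a $b$-period (so each $\theta$ picks up its quasi-periodicity exponential), while the involution acts by $\mathbf{A}(\widehat{P})\equiv-\mathbf{A}(P)$ on the Jacobian; inserting these into the explicit formulas for $\psi,\psi_{aux}$ and using $\theta$ quasi-periodicity together with Riemann's identities, one checks that the accumulated exponential and theta factors collapse once the $\xi$ and $e^{i\alpha+ikx+4ik^3t}$ corrections are included, leaving exactly the nonlocal scalar jump, with density supported on $i\Gamma$ and weights $e^{\mp2px\pm8p^3t}R_1,R_2$. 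The relation on the negative part of $i\Gamma$ follows from $\xi(-k)=\xi(k)$ and $P_{4g}(-k)=P_{4g}(k)$.

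I expect this jump computation to be the genuine obstacle. It requires tracking the multivaluedness of the Abel map and of the second-kind integrals $\int_\infty^k\omega^{(n)}$ simultaneously across the branch cuts and under the hyperelliptic involution, and then checking that $\xi(k)$ and $e^{i\alpha(k,\boldsymbol{\delta})+ikx+4ik^3t}$ have been chosen precisely so that all of this quasi-periodic bookkeeping cancels and the outcome is a clean, support-$i\Gamma$, nonlocal scalar jump. It is exactly this cancellation that forces the specific forms of $\xi$ and $\alpha$, and hence of the dressing functions $R_1,R_2$ in Theorem \ref{thm:fingap}.
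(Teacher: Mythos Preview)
Your treatment of analyticity off $i\Gamma$, the endpoint growth, and the normalization at $k=\infty$ is essentially the paper's argument, factor by factor, and is correct.

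The difference is in the jump computation, and here you are making it much harder than it needs to be. You propose to derive the jump by tracking $b$-period shifts of $\mathbf{A}(k)$, theta quasi-periodicity, and the hyperelliptic involution $\mathbf{A}(\widehat{P})\equiv -\mathbf{A}(P)$, and you describe this as ``the genuine obstacle''. The paper avoids this calculation entirely. The point of building $\Sigma$ as the quotient $\tilde\Sigma/\langle\iota\rangle$ with $\iota(k,w)=(-k,-w)$ and using $k$ as the sheet coordinate is precisely that, for $p\in\Gamma$, the boundary values $\psi_+(ip)$ and $\psi_+(-ip)$ are values of the \emph{same} meromorphic function on $\Sigma$ at the \emph{same} point (and likewise with $+$ replaced by $-$), so that
\[
\psi_\pm(ip,x,t)=\psi_\pm(-ip,x,t),\qquad \psi_{aux,\pm}(ip,\boldsymbol{\delta})=\psi_{aux,\pm}(-ip,\boldsymbol{\delta})
\]
hold trivially, with no theta bookkeeping at all. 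The entire nonlocal jump of $\chi$ then comes from the two elementary factors: $\xi$ contributes $\xi_\pm(ip)=\pm i\,\mathrm{sgn}(p)\,\xi_\pm(-ip)$, and the entire prefactor contributes $e^{i\alpha(ip,\boldsymbol{\delta})-px+4p^3t}=e^{2i\alpha(ip,\boldsymbol{\delta})-2px+8p^3t}e^{i\alpha(-ip,\boldsymbol{\delta})+px-4p^3t}$ by oddness of $\alpha(k,\boldsymbol{\delta})$, $kx$, $k^3t$. Multiplying these three relations gives property~(4) in one line.

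Your route via theta quasi-periodicity should in principle also arrive at the answer, but you have conflated local and nonlocal jumps along the way: you write the key ratio as $\psi_+\psi_{aux,+}/(\psi_-\psi_{aux,-})$, which compares the two sides of the cut at the \emph{same} $ip$, whereas the stated Riemann--Hilbert problem relates $\chi_+(ip)$ to $\chi_+(-ip)$ and $\chi_-(ip)$ to $\chi_-(-ip)$. You would need an additional symmetry argument to convert one into the other, and once you set that up correctly you are effectively rediscovering the $\iota$-identification that the paper exploits directly.
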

\begin{rhp}
Find a function $\chi(k,x,t)$ such that
\begin{enumerate}
\item $\chi(k,x,t)$ is a holomorphic function for $k \in \mathbb{C} \setminus i \Gamma$.
\item $\chi(k,x,t)$ has continuous non-tangential boundary values
\begin{equation} \chi_+ (ip,x,t) = \lim_{\epsilon \to 0^+} \chi(ip+\epsilon,x,t), \quad \chi_- (ip,x,t) = \lim_{\epsilon \to 0^+} \chi(ip-\epsilon,x,t) \end{equation}
for $k \in i\Gamma \setminus \{ \text{endpoints} \}$.
\item  For fixed $x,t$, there exists a constant $C(x,t)$ such that
\begin{equation} \chi(k,x,t) \le C(x,t) |k \pm i\kappa_j|^{-1/4} \end{equation}
for $k$ in a neighborhood of $\pm i \kappa_j$.
\item The boundary values $\chi_\pm$ satisfy the jump relations
\begin{equation} \chi_+(ip,x,t) = i \text{sgn}(p) e^{2i\alpha(ip,\boldsymbol{\delta})} e^{-2px + 8 p^3 t} \chi_+(-ip,x,t), \end{equation}
\begin{equation} \chi_-(ip,x,t) = -i \text{sgn}(p) e^{2i\alpha(ip,\boldsymbol{\delta})} e^{-2px+8p^3 t} \chi_-(-ip,x,t), \end{equation}
for $p \in \Gamma$.
\item For fixed $x$ and $t$, $\chi(k,x,t)$ has asymptotic behavior
\begin{equation} \boldsymbol{\chi}(k,x,t) = \mathbf{1} + O(k^{-1}) \text{ as } k \to \infty. \end{equation}
where $\mathbf{1} = (1,1)$.

\end{enumerate}
\end{rhp}

Property (3) of the above nonlocal scalar Riemann--Hilbert problem did not appear in the papers \cite{DZZ16,ZDZ16,ZZD16}, and this oversight was addressed in \cite{NZZ18}.

\begin{proof}
To pin down the function $\xi$ we must additionally set the branch cut to be $i\Gamma$, and take the choice of branch with asymptotic behavior $\xi(k) = 1+O(k^{-1})$ as $k \to \infty$.
The fact that $\psi$ and $\psi_{aux}$ are meromorphic for $\mathbb{C} \setminus i\Gamma$ is inherited from the fact that they are meromorphic functions on $\Sigma$.
Moreover, the fact that product $\psi_{aux}(k,x,t) \psi(k,x,t)$ only has poles on $\left< i \kappa_{2 \ell-1} \right>$ means that $\psi_{aux}(k,x,t) \psi(k,x,t)$ is holomorphic in $\mathbb{C} \setminus i \Gamma$.
These functions are not holomorphic on $i \Gamma$ because the coordinate $k$ for $\Sigma$ does not extend to $i \Gamma$.
By combining these considerations with the fact that the exponential terms are entire, property (1) becomes clear.

From the constructions of $\psi$ and $\psi_{aux}$ as meromorphic functions on $\Sigma \setminus \infty$ and the placement of their poles, it clear that they analytically continue across the jump on two sheets into open regions containing $i\Gamma \setminus \{\text{endpoints}\}$.
Continuity of the boundary values of $\xi$ and $e^{i\alpha(k,x,t)+ikx+4 i k^3 t}$ in $i \Gamma \setminus \{ \text{endpoints} \}$ is also clear.
Therefore, $\chi$ itself has continuous boundary values $\chi_\pm(k,x,t)$ for $k \in i \Gamma \setminus \{\text{endpoints} \}$, proving property (2).

The pole conditions of $\psi(k,x,t) \psi_{aux}(k,x,t)$ at $\left<i \kappa_{2 \ell - 1} \right>$ on the endpoints of the cuts imply that
\begin{equation} \label{eq:sing1} \psi(k,x,t) \psi_{aux}(k,x,t) = b_{2 \ell - 1}(x,t) (i \kappa_{2 \ell - 1} \mp k)^{-\frac{1}{2}} + 
O(1)\end{equation}
for some $b_{2\ell-1}(x,t)$ as $k \to \pm i\kappa_{2 \ell - 1}$, $\ell = 1,2,\dots,g$.

Regularity conditions at $\left<\kappa_{2 \ell}\right>$ imply that
\begin{equation} \label{eq:sing2} \psi(k,x,t) \psi_{aux}(k,x,t) = b_{2 \ell}(x,t) + O((i \kappa_{2 \ell - 1} \mp k)^{\frac{1}{2}})\end{equation}
for some $b_{2 \ell}(x,t)$ as $k \to \pm i\kappa_{2 \ell}$, $\ell = 1,2,\dots,g$.
In the above, the branch cuts of the square roots are chosen to align locally with $i\Gamma$.
The function $e^{i \alpha(k,\boldsymbol{\delta}) + i k x + 4 i k^3 t}$ is entire, so multiplying by it has no effect on the order of the singular behaviors \eqref{eq:sing1}, \eqref{eq:sing2}. 
Multiplying the singular behaviors \eqref{eq:sing1}, \eqref{eq:sing2} by the singular/zero behavior of $\xi(k)$ near $\pm i \kappa_j$ give singular behaviors
\begin{equation} \xi(k) \psi(k,x,t) \psi_{aux}(k,x,t) = \tilde b_j(x,t) (i \kappa_j \mp k)^{-\frac{1}{4}} + O(1) \label{eq:qroot} \end{equation} for some $\tilde b_j(x,t)$ as $k \to \pm i \kappa_j$, $j = 1,2,\dots,2g$.
The brach cuts of the quartic roots are chosen to align locally with the cuts on $i \Gamma$.
Property (3) follows easily from the singular behavior \eqref{eq:qroot} at $\pm i \kappa_j$ because the singular behavior occurs only at a finite number of points.

As a meromorphic function on $\Sigma$, the Baker--Akheizer function $\psi$ satisfies
\begin{equation} \psi_+(ip,x,t) = \psi_+(-ip,x,t), \quad \psi_-(ip,x,t) = \psi_-(-ip,x,t) \label{eq:BAjump}\end{equation}
for $p \in \Gamma$.
The auxiliary Baker--Akheizer function $\psi_{aux}$ also satisfies the jump relation \eqref{eq:BAjump}. 
The function $\xi(k)$ satisfies the jump relation
\begin{equation} \xi_+(ip) = i \text{sgn}(p) \xi_+(-ip), \quad \xi_-(ip) = -i \text{sgn}(p) \xi_-(-ip) \end{equation}
for $p \in \Gamma$.
The function $e^{i \alpha(k,\boldsymbol{\delta})+ikx}$ satisfies the relation
\begin{equation} (e^{i \alpha(k,\boldsymbol{\delta}) + i k x + 4ik^3t})|_{k = ip} = e^{2i\alpha(ip,\boldsymbol{\delta})-2px+8p^3t} (e^{i \alpha(k,\boldsymbol{\delta}) + i k x + 4 i k^3 t})|_{k = -ip}. \end{equation}
Combining these jump relations gives the jump relations appearing in property (4).

Property (5) follows from the fact that $\xi(k)$, $e^{ikx+4ik^3t}\psi(k,x,t)$ and $e^{i\alpha(k,\boldsymbol{\delta})} \psi_{aux}(k,x,t)$ all have asymptotic behaviors $1+O(k^{-1})$ as $k \to \infty$ (for fixed $x,t$).
\end{proof}

\begin{remark}
The nonlocal Riemann--Hilbert problem solved by $\chi$ is equivalent to a local vector Riemann--Hilbert problem solved by $\boldsymbol{\chi} = [\chi(k,x,t),\chi(-k,x,t)]$.
The minor changes to the analogous local vector Riemann--Hilbert problem discussed in \cite{DZZ16,ZDZ16,ZZD16} that need to be made to accommodate finite gap solutions are clear from the conditions on the nonlocal scalar Riemann--Hilbert problem.
One reason that the power in the bound (3) is $-\tfrac{1}{4}$ is important is because it implies that the non-tangential boundary values $\chi_\pm$ are elements of $L^2(i\Gamma) \equiv L^2(\Gamma)$.
\end{remark}

The explicit form of the local vector Riemann--Hilbert problem discussed above is the following:

\begin{rhp} For all $x,t$ find a $1 \times 2$ vector valued function $\boldsymbol{\chi}(k,x,t)$ such that
\begin{enumerate}
\item $\boldsymbol{\chi}$ is a holomorphic function of $k \in \mathbb{C} \setminus i \Gamma$.
\item The boundary values
\begin{equation}\boldsymbol{\chi}_+(ip,x,t) = \lim_{\epsilon \to 0^+} \boldsymbol{\chi}(ip+\epsilon,x,t), \quad \boldsymbol{\chi}_-(ip,x,t) = \lim_{\epsilon \to 0^+} \boldsymbol{\chi}(ip-\epsilon,x,t)\end{equation}
of $\boldsymbol{\chi}$ for $p \in  \Gamma \setminus\{\text{endpoints of }\Gamma\}$ are continuous.
\item  For fixed $x,t$, there exists a constant $C(x,t)$ such that
\begin{equation} \chi_1(k,x,t),\chi_2(k,x,t) \le C(x,t) |k \pm i\kappa_j|^{-1/4} \end{equation}
for $k$ in a neighborhood of $\pm i \kappa_j$.
\item The boundary values $\boldsymbol{\chi}_\pm(ip;x,t)$ of $\boldsymbol{\chi}(k,x,t)$ for $p \in \Gamma$ are related by
\begin{equation}
\boldsymbol{\chi}_+(ip,x,t) = \boldsymbol{\chi}_-(ip,x,t) V(p;x,t)
\end{equation}
where
\begin{equation} V(p;x,t) =  \begin{pmatrix}  \frac{1 - R_1(p) R_2(p)}{1 + R_1(p) R_2(p)} & \frac{2 i  R_1(p)}{1 + R_1(p) R_2(p)} e^{-2px+8p^3t} \\ \frac{2 i  R_2(p)}{1 + R_1(p) R_2(s)} e^{2px-8p^3t} & \frac{1 - R_1(p) R_2(p)}{1 + R_1(p) R_2(p)}  \end{pmatrix}. \end{equation}
\item The function $\boldsymbol{\chi}$ has the limiting behaviors $\chi(k) \to 1$ as $k \to \infty$.
\item $\boldsymbol{\chi}$ satisfies the symmetry
\begin{equation} \boldsymbol{\chi}(-k,x,t) = \boldsymbol{\chi}(k,x,t) \begin{pmatrix} 0 & 1 \\ 1 & 0 \end{pmatrix}. \end{equation}
\end{enumerate}
\end{rhp}

\begin{proof}[Proof of Theorem \ref{thm:fingap}]
We make the assumption that the solution to the Riemann--Hilbert problem from proposition \ref{prop:RHP} has the form
\begin{equation} \chi(k,x,t) = 1 + \frac{i}{\pi} \int_\Gamma \frac{\tilde f(s,x,t)}{k- is} ds.\end{equation}
The boundary values of $\chi(i p,x)$ for $p \in \Gamma$ are then given in terms of $\tilde f$ as
\begin{equation} \chi_+(ip,x,t) = 1 + \hat H_{\Gamma} \tilde f(p,x,t) + i \tilde f(p,x,t), \end{equation}
\begin{equation} \chi_-(ip,x,t) = 1 + \hat H_{\Gamma} \tilde f(p,x,t) - i \tilde f(p,x,t), \end{equation}
where $\hat H_{\Gamma}$ is the Hilbert transform with support on $\Gamma$
\begin{equation} \hat H_{\Gamma} \tilde f(k,x,t) = \frac{1}{\pi} \fint_\Gamma \frac{\tilde f(s,x,t)}{k-s}ds.  \end{equation}
The jump conditions on $\chi$ give the system of integral equations
\begin{equation} 1 + \hat H_{\Gamma} \tilde f(p,x,t) + i \tilde f(p,x,t) = i \text{sgn}(p) e^{2i\alpha(ip,\boldsymbol{\delta})} e^{-2px+8p^3t} (1 + \hat H_{\Gamma} \tilde f(-p,x,t) + i \tilde f(-p,x,t)),  \end{equation}
\begin{equation} 1 + \hat H_{\Gamma} \tilde f(p,x,t) - i \tilde f(p,x,t) = -i \text{sgn}(p) e^{2i\alpha(ip,\boldsymbol{\delta})} e^{-2px+8p^3t} (1 + \hat H_{\Gamma} \tilde f(-p,x,t) - i \tilde f(-p,x,t)).  \end{equation}
These are equivalent to
\begin{equation} \tilde f(p,x,t) - \text{sgn}(p) e^{2 i\alpha(ip,\boldsymbol{\delta})} e^{-2px+8p^3t} \hat H_{\Gamma} \tilde f(-p,x,t) = \text{sgn}(p) e^{2 i\alpha(ip,\boldsymbol{\delta})} e^{-2px+8p^3t},  \end{equation}
\begin{equation} \tilde f(-p,x,t) + \text{sgn}(p) e^{-2 i\alpha(ip,\boldsymbol{\delta})} e^{2px-8p^3t} \hat H_{\Gamma} \tilde f(p,x,t) = -\text{sgn}(p) e^{-2 i\alpha(ip,\boldsymbol{\delta})} e^{2px-8p^3t}.  \end{equation}

Define two functions $f,g : [k_1,k_2] \to \mathbb{R}$ by
\begin{equation} f(p,x,t) = \begin{cases} \tilde f(p,x,t) & p \in \Gamma \cap [k_1,k_2] \\
0 & otherwise \end{cases}, \end{equation}
\begin{equation} g(p,x,t) = \begin{cases} -\tilde f(-p,x,t) & p \in \Gamma \cap [k_1,k_2] \\
0 & otherwise \end{cases}. \end{equation}
Then $f$ and $g$ solve
\begin{align}
f(p,x,t)+\frac{R_1(p)}{\pi}e^{-2px+8p^3t}\left[\int_{k_1}^{k_2} \frac{f(q,x,t)}{p+q}dq+\fint_{k_1}^{k_2}\frac{g(q,x,t)}{p-q}dq\right]=R_1(p)e^{-2px+8p^3t},  \\
g(p,x,t)+\frac{R_2(p)}{\pi}e^{2px-8p^3t}\left[\fint_{k_1}^{k_2} \frac{f(q,x,t)}{p-q}dq+\int_{k_1}^{k_2}\frac{g(q,x,t)}{p+q}dq\right]=-R_2(p)e^{2px-8p^3t},
\end{align}
where
\begin{align} & R_1(p) = \exp\left( \sum_{j=1}^g a_j p^{2j-1} \right) \sum_{\ell=1}^g \mathbbm{1}_{[\kappa_{2\ell-1}, \kappa_{2 \ell}]} (p),  \\
& R_2(p) = \exp\left( -\sum_{j=1}^g a_j p^{2j-1} \right) \sum_{\ell=1}^g \mathbbm{1}_{[\kappa_{2\ell-1}, \kappa_{2 \ell}]} (p),   \end{align}
are defined in terms of the real coefficients $a_j = (-1)^j 2 t_j(\boldsymbol{\delta})$.

\end{proof}

\section{Numerical Primitive Solutions}

The primitive potential/solution method for the KdV equation lends itself to numerical evaluations because of the simplicity of the system of singular integral equations (\ref{eq:f},\ref{eq:g}) \cite{DZZ16,ZDZ16,ZZD16,NZZ18}.
The only difficulty is that the matrices discretizing the system (\ref{eq:f},\ref{eq:g}) are badly conditioned, and so the matrix inversions must be computed with arbitrary precision arithmetic.
A regularization method that allows system (\ref{eq:f},\ref{eq:g}) to be solved with double precision arithmetic would be invaluable.

In this section we will numerically compute finite gap solutions as primitive potentials in the $g=2$ and $g=3$ gap cases.
For the $g=2$ gap case, we also consider the following cases numerically:
\begin{itemize}
\item The case where
\begin{equation} R_1(p)=\exp\left( \sum_{j = 1}^g a_j p^{2j-1} \right) \sum_{\ell = 1}^g \mathbbm{1}_{[\kappa_{2\ell-1}, \kappa_{2 \ell}]}(p), \quad R_2=0. \end{equation}
These are steplike solutions that approach finite gap solutions as $x \to -\infty$ and approach $0$ as $x \to \infty$.
These evolve into dispersive shockwave type solutions.
\item The case where
\begin{equation} R_1(p)= R \exp\left( \sum_{j = 1}^g a_j p^{2j-1} \right) \sum_{\ell = 1}^g \mathbbm{1}_{[\kappa_{2\ell-1}, \kappa_{2 \ell}]}(p), \end{equation} \begin{equation} R_2(p)= R \exp\left(- \sum_{j = 1}^g a_j p^{2j-1} \right) \sum_{\ell = 1}^g \mathbbm{1}_{[\kappa_{2\ell-1}, \kappa_{2 \ell}]}(p) \end{equation}
for $R > 0$.
For large $R$, these solutions appear to have a high amplitude region and a phase modulation that is localized in space-time.
\end{itemize}

\begin{figure}
\includegraphics[width=\textwidth]{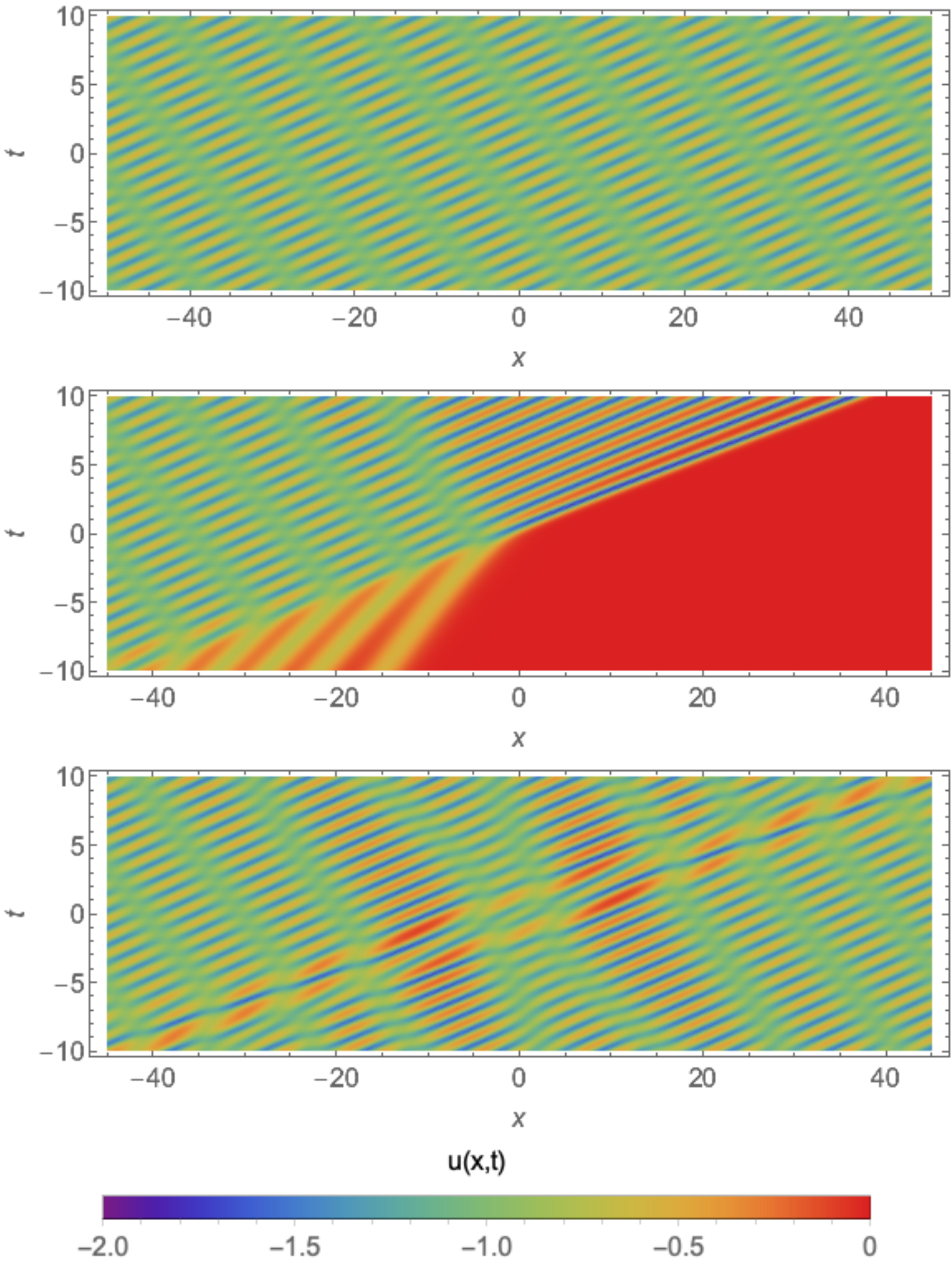}
\caption{Comparison of primitive solutions corresponding to an algebro-geometric two gap solution, a one sided two gap solution, and a two sided two gap solution with $R=100$.
All three correspond to $\kappa_1=\frac{1}{2}, \kappa_2=\frac{\sqrt{2}}{2}, \kappa_3 = \frac{\sqrt{3}}{2}, \kappa_4=1$ and $a_1=a_2=0$.}
\end{figure}

\begin{figure}
\includegraphics[width=\textwidth]{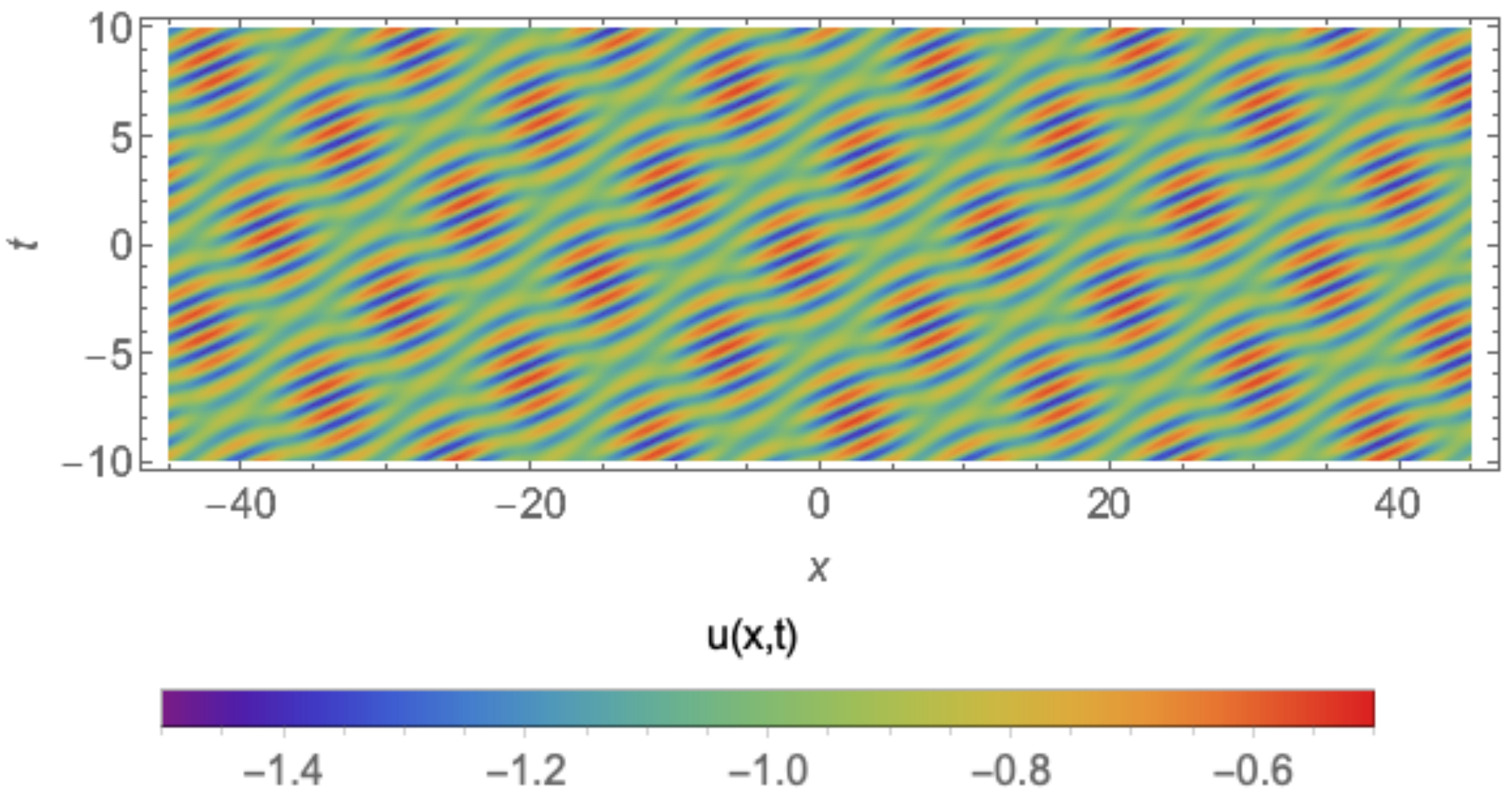}
\caption{A primitive solution corresponding to an algebro-geometric three gap potentials correspond to $\kappa_1=\frac{1}{\sqrt{6}}, \kappa_2=\frac{1}{\sqrt{3}},\kappa_3=\frac{1}{\sqrt{2}},\kappa_4=\frac{\sqrt{2}}{\sqrt{3}},\kappa_5 = \frac{\sqrt{5}}{\sqrt{6}}$, $\kappa_6=1$, $a_1=a_2=0$ and $a_3=6$.}
\end{figure}

\begin{figure}
\includegraphics[width=\textwidth]{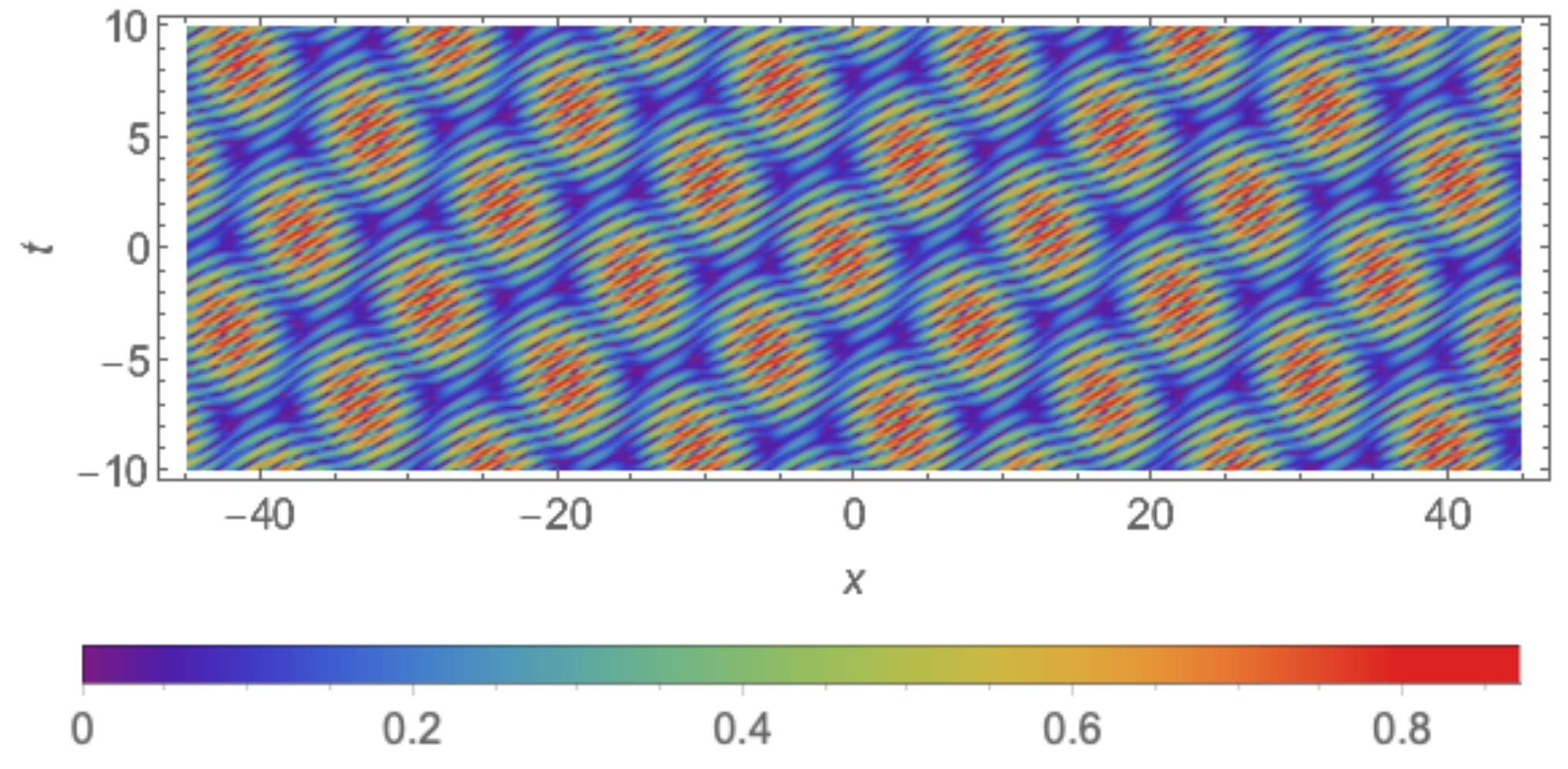}

\caption{The absolute difference between the three gap solutions to the KdV equation appearing in figure 2 and the three gap solution determined by all the same parameters except $a_3=0$.}
\end{figure}

We compute all these solutions numerically as follows:
\begin{enumerate}
\item We discretize the integral equations (\ref{eq:f},\ref{eq:g}) using Gauss--Legendre quadrature on each component of the support of $R_1$ and $R_2$, and handle the singularity be ignoring it (i.e. we set the singular matrix elements to 0).
\item We use the same Gauss--Legendre quadrature rule to evaluate the integral appearing in formula (\ref{eq:u}).
\item We compute the derivative appearing in (\ref{eq:u}) on a uniform spatial grid spectrally using a fast Fourier transform, and we use a Butterworth filter to remove oscillatory artifacts due Gibbs phenomena on the edges of the spatial grid.
\end{enumerate}
An alternate method to evaluating (\ref{eq:u}) that could be implemented at an arbitrary set of space time points $(x,t)$ would be to compute $f_x(p,x,t)$ and $g_x(p,x,t)$ after computing $f(p,x,t)$ and $g(p,x,t)$ by solving the following system of singular integral equations
\begin{align}
& f_x(p,x,t)+\frac{R_1(p)}{\pi}e^{-2px+8p^3t}\left[\int_{k_1}^{k_2} \frac{f_x(q,x,t)}{p+q}dq+\fint_{k_1}^{k_2}\frac{g_x(q,x,t)}{p-q}dq\right] \\
\nonumber & \quad \quad = -2 p  R_1(p)e^{-2px+8p^3t} \left( 1 -  \int_{k_1}^{k_2} \frac{f(q,x,t)}{p+q}dq - \fint_{k_1}^{k_2}\frac{g(q,x,t)}{p-q}dq \right) ,  \\
&g_x(p,x,t) + \frac{R_2(p)}{\pi}e^{2px-8p^3t} \left[\fint_{k_1}^{k_2} \frac{f_x(q,x,t)}{p-q}dq+\int_{k_1}^{k_2}\frac{g_x(q,x,t)}{p+q}dq \right] \\
\nonumber &\quad \quad = -2 p R_2(p)e^{2px-8p^3t} \left(1 + \fint_{k_1}^{k_2} \frac{f(q,x,t)}{p-q}dq+\int_{k_1}^{k_2}\frac{g(q,x,t)}{p+q}dq \right),
\end{align}
which can be derived by differentiating system (\ref{eq:f},\ref{eq:g}) in $x$.
We can then compute (\ref{eq:u}) via
\begin{equation} u(x,t) = \frac{2}{\pi} \int_{k_1}^{k_2} f_x(q,x,t) + g_x(q,x,t) dq. \end{equation}
We opted against this method in this paper because this would require a second ill-conditioned matrix inversion and is, therefore, much slower than differentiation via the fast Fourier transform.

Figure 1 shows space-time plots of the $g=2$ gap numerical solutions, figure 2 shows a space-time plot of a periodic $g=3$ gap numerical solution, and figure 3 shows a space-time plot of the absolute difference between two $g=3$ gap numerical solutions. 

\section{Conclusions}

We have shown how algebro-geometric finite gap potentials can be computed using primitive solutions.
This produces an effective way to generate sequences of shifted N-soliton solutions (remark \ref{rmk:shifty}) that converge to any algebro-geometric finite gap solution in any compact region of space-time.
We have also demonstrated numerically that by modifying the dressing functions for the algebro-geometric finite gap solution it is possible to compute interesting potentials that have finitely many spectral gaps but are not algebro-geometric finite gap potentials.
In particular, the third plot in figure 1 shows a primitive solution that appears to have a disturbance in space-time that is localized near the origin.
This naturally leads us to the following question:
Is it possible to rigorously describe the behavior of this type of solution to the KdV equation near the origin?
A rigorous asymptotic description of primitive potentials with a single spectral gap and $R_2=0$ via nonlinear steepest descent was given in \cite{GGJM18}.
Therefore, it may be possible to build on the approach of \cite{GGJM18} to answer this question.

\clearpage
\section{Acknowledgments}

I would like to thank Dmitry and Vladimir Zakharov for introducing me to primitive potentials. I would like to thank Dmitry Zakharov for bringing the importance of the higher KdV flows to my attention. I would like to thank Sergey Dyachenko for providing some advice on computing primitive potentials numerically.

I would like to thank the reviewers for their insightful suggestions for this paper. The reviewers comments helped improve the paper.

This research was supported in part by the National Science Foundation under grant DMS-1715323.

\end{document}